\definecolor{webgreen}{rgb}{0,.5,0}
\definecolor{webbrown}{rgb}{.6,0,0}
\DeclarePairedDelimiter{\abs}{\lvert}{\rvert}
\newcommand{\seqnum}[1]{\href{http://oeis.org/#1}{\underline{#1}}}
\def\modd#1 #2{#1\ \mbox{\rm (mod}\ #2\mbox{\rm )}}
\newcommand{\xmark}{\text{\ding{55}}}
\def\suchthat{\, : \,}
\begin{document}

\theoremstyle{plain}
\newtheorem{theorem}{Theorem}
\newtheorem{corollary}[theorem]{Corollary}
\newtheorem{lemma}[theorem]{Lemma}
\newtheorem{proposition}[theorem]{Proposition}

\theoremstyle{definition}
\newtheorem{definition}[theorem]{Definition}
\newtheorem{example}[theorem]{Example}
\newtheorem{conjecture}[theorem]{Conjecture}

\theoremstyle{remark}
\newtheorem{remark}[theorem]{Remark}

\title{Strongly $k$-recursive sequences}

\author{Daniel Krenn\\
Fachbereich Mathematik\\
Paris Lodron University of Salzburg\\
Hellbrunner Stra{\ss}e 34\\
5020 Salzburg\\
Austria\\
\href{mailto:math@danielkrenn.at}{\tt math@danielkrenn.at} \\
or \href{mailto:daniel.krenn@plus.ac.at}{\tt daniel.krenn@plus.ac.at} 
\and
Jeffrey Shallit\thanks{Supported by NSERC Grant 2018-04118.} \\
School of Computer Science\\
University of Waterloo\\
200 University Ave. W.\\
Waterloo, ON  N2L 3G1 \\
Canada \\
\href{mailto:shallit@uwaterloo.ca}{\tt shallit@cs.uwaterloo.ca} }
\date{}

\maketitle

\begin{abstract}
Drawing inspiration from a recent paper of Heuberger, Krenn, and Lipnik,
we define the class of strongly $k$-recursive sequences. We
show that every $k$-automatic sequence is strongly $k$-recursive,
therefore $k$-recursive, and discuss that the converse is not true.

We also show that the class of strongly $k$-recursive sequences is
a proper subclass of the class of $k$-regular sequences, and we present some
explicit examples.  We then extend the proof techniques to answer the same question for the class of $k$-recursive sequences.
\end{abstract}

\section{Introduction}

\paragraph{Four classes of sequences.}
Here are three well-studied interesting classes of sequences related to base-$k$ expansions.

\begin{enumerate}
\item
A sequence $(f(n))_{n \geq 0}$ is said to be {\it $k$-automatic\/} if there exists
a deterministic finite automaton with output that, on input
$n$ expressed in base $k$, reaches a state with output $f(n)$;
see~\cite{Allouche&Shallit:2003a} for more details.
An example is the Thue--Morse sequence~$\bf t$;
we will use it in Example~\ref{example:factor-compl-tm}.

\item
A sequence $(f(n))_{n \geq 0}$ is said to be 
{\it $k$-regular\/}
if there exists
a finite set $S$ of subsequences
of the form $(f(k^r n + a))_{n \geq 0}$ 
with $0 \leq a < k^r$, and
such that every subsequence of the form
$(f(k^t n + b))_{n \geq 0}$ with $t \geq 0$ and $0 \leq b < k^t$
is a linear combination of
the elements of $S$.  This is an interesting class that
is discussed, for example,
in~\cite{Allouche&Shallit:1992,Allouche&Shallit:2003a}.  A classic example of a $k$-regular sequence is $(s_k (n))_{n \geq 0}$, where
$s_k(n)$ is the sum of the digits of $n$ when expressed in base~$k$.

\item
A sequence $(f(n))_{n \geq 0}$ taking its values in the natural numbers is said
to be {\it $k$-synchronized\/} if there exists a deterministic finite automaton that
accepts $\{ (n,m)_k \suchthat m = f(n) \}$.
Here $(n,m)_k$ is a representation of
the pair $(n,m)$ in base~$k$, where we pad the shorter representation
with leading zeros, if necessary, and the digits of the two numbers
are read in parallel. More about synchronized sequences can be found,
for example, in~\cite{Shallit:2021h}.
\end{enumerate}

It is known that every $k$-automatic sequence is $k$-synchronized (see \cite[Theorem~4]{Shallit:2021h}), and every
$k$-synchronized sequence is $k$-regular
(see \cite[Theorem~6]{Shallit:2021h}).

Recently Heuberger, Krenn, and Lipnik~\cite{Heuberger&Krenn&Lipnik:2022}
introduced a fourth related class of sequences, the
$k$-recursive sequences.

\begin{enumerate}[resume]
\item
We say a sequence $(f(n))_{n \geq 0}$ is {\it $k$-recursive\/}
if there exist two natural numbers $r < t$ and two integers $L < U$ and a natural number $n_0$ such that
every subsequence of the form
$(f(k^t n + b))_{n \geq n_0}$ with $0 \leq b < k^t$
is a linear combination of the elements of the set
$$ \{ (f(k^r + a))_{n \geq n_0} \suchthat L \leq a < U  \} .$$
\end{enumerate}

The authors of~\cite{Heuberger&Krenn&Lipnik:2022} proved that every $k$-recursive sequence is $k$-regular.
This suggests the question of whether there is a $k$-regular
sequence that is not $k$-recursive. In this paper,
we provide an affirmative answer (Theorem~\ref{theorem:g-not-recursive}).

\paragraph{Strongly $k$-recursive sequences.}
The investigation of the question above leads to a
natural variation on the definition in~\cite{Heuberger&Krenn&Lipnik:2022} and
studying the corresponding class of sequences:
in the definition of a $k$-recursive sequence,
we insist that $n_0 = 0$, $L \geq 0$, and $U \leq k^t$.  We call such a sequence
{\it strongly $k$-recursive}.

Clearly, every strongly $k$-recursive sequence is $k$-recursive. In
this paper we determine the relationship of this new class to the
classes of automatic (Theorem~\ref{theorem:automatic-is-strongly-recursive}),
synchronized (Proposition~\ref{proposition:g-is-synchronized}), and
regular sequences (Theorem~\ref{theorem:g-not-strongly-recursive}).

Before presenting and proving the actual results, we start by an
example of a strongly $k$-recursive sequence.

\begin{example}\label{example:factor-compl-tm}
Here is an example of a strongly $2$-recursive sequence.
Let $f_{\bf t}(n)$ denote the factor complexity of the Thue--Morse
sequence; that is, the number of distinct length-$n$ blocks occurring
in the sequence ${\bf t} = 0110100110010110\dots$;
see~\cite{Avgustinovich:1994,Brlek:1989}.  Then we claim that
\begin{align*}
f_{\bf t}(16n) &= f_{\bf t}(8n)-f_{\bf t}(8n+1)+3f_{\bf t}(8n+2)-f_{\bf t}(8n+4) \\
f_{\bf t}(16n+1) &= 3f_{\bf t}(8n+2)-f_{\bf t}(8n+4)\\
f_{\bf t}(16n+2) &= -f_{\bf t}(8n+1)+4f_{\bf t}(8n+2)-f_{\bf t}(8n+4)\\
f_{\bf t}(16n+3) &= -2f_{\bf t}(8n+1) + 5f_{\bf t}(8n+2) - f_{\bf t}(8n+4)\\
f_{\bf t}(16n+4) &= -f_{\bf t}(8n+1) + 3f_{\bf t}(8n+2)\\
f_{\bf t}(16n+5) &= -2f_{\bf t}(8n+1) + 4f_{\bf t}(8n+2)\\
f_{\bf t}(16n+6) &= -f_{\bf t}(8n+1) + 2f_{\bf t}(8n+2) + f_{\bf t}(8n+4)\\
f_{\bf t}(16n+7) &= 2f_{\bf t}(8n+4)\\
f_{\bf t}(16n+8) &= -f_{\bf t}(8n+1)+f_{\bf t}(8n+2)+2f_{\bf t}(8n+4)\\
f_{\bf t}(16n+9) &=  -2f_{\bf t}(8n+1)+2f_{\bf t}(8n+2)+2f_{\bf t}(8n+4)\\
f_{\bf t}(16n+10) &=  -f_{\bf t}(8n+1)+f_{\bf t}(8n+2)+f_{\bf t}(8n+4)+f_{\bf t}(8n+6)\\
f_{\bf t}(16n+11) &= 2f_{\bf t}(8n+6)\\
f_{\bf t}(16n+12) &= f_{\bf t}(8n+1)-f_{\bf t}(8n+2)-f_{\bf t}(8n+4)+3f_{\bf t}(8n+6)\\
f_{\bf t}(16n+13) &= 2f_{\bf t}(8n+1)-2f_{\bf t}(8n+2)-2f_{\bf t}(8n+4)+4f_{\bf t}(8n+6)\\
f_{\bf t}(16n+14) &= f_{\bf t}(8n+1)-f_{\bf t}(8n+4)+5f_{\bf t}(8n+6)\\
f_{\bf t}(16n+15) &= 2f_{\bf t}(8n+2)-6f_{\bf t}(8n+4)+6f_{\bf t}(8n+6),
\end{align*}
which shows that $(f_{\bf t}(n))_{n \geq 0}$ is strongly $2$-recursive.
These relations can be easily proved using
the {\tt Walnut} theorem prover~\cite{Mousavi:2016,Shallit:2022},
and the necessary commands appear
in Appendix~\ref{sec:walnut-code}.
\label{examp1}
\end{example}

\section{Automatic sequences}

We begin our studies by determining the relationship between the
$k$-automatic and strongly $k$-recursive sequences.

\begin{theorem}\label{theorem:automatic-is-strongly-recursive}
Every $k$-automatic sequence is strongly $k$-recursive.
\end{theorem}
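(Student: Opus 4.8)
The plan is to show that the set of subsequences $(f(k^t n + b))_{n \geq 0}$ is spanned by a finite set of subsequences of the form $(f(k^r n + a))_{n \geq 0}$ with the constraints $0 \leq a < k^r$ required by the definition of strongly $k$-recursive. The natural source of structure here is the finite automaton computing $f$. The key observation I would exploit is that for a $k$-automatic sequence, each subsequence $(f(k^r n + a))_{n \geq 0}$ is itself $k$-automatic, and moreover the automaton reading $n$ in base $k$ with the digits of $a$ prepended lands us in some state of the original automaton. So the relevant data is really the finite set of states of the DFAO.

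\emph{First} I would fix a DFAO $M = (Q, \Sigma_k, \delta, q_0, \tau)$ computing $f$, so that $f(n) = \tau(\delta(q_0, (n)_k))$, where $(n)_k$ denotes the base-$k$ representation. \emph{Second}, for a state $q \in Q$, I would define the sequence $g_q(n) = \tau(\delta(q, (n)_k))$, the output obtained by starting the automaton in state $q$ and reading $n$. The sequence $f$ itself is $g_{q_0}$. The central fact is the \emph{decimation identity}: for each digit $d \in \{0, 1, \dots, k-1\}$ and each state $q$, one has $g_q(kn + d) = g_{\delta(q,d)}(n)$, which follows directly from appending the digit $d$ and the definition of $\delta$. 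Iterating this over $r$ digits shows that every subsequence $(g_q(k^r n + a))_{n \geq 0}$ with $0 \leq a < k^r$ equals $g_{q'}$ for the state $q' = \delta(q, a)$ reached by reading the length-$r$ base-$k$ representation of $a$ (padded with leading zeros to length exactly $r$).

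\emph{Third}, I would choose the parameters. Set $r = 0$; then $a$ ranges over $\{0\}$ only, and the candidate spanning set is simply $\{ (f(n))_{n\ge0} \} = \{ g_{q_0} \}$, which is too small. So instead I need the spanning set to capture all the states reachable from $q_0$. The clean choice is to take the spanning set to be $\{ (g_q(n))_{n \geq 0} \suchthat q \in Q \}$ and to realize each $g_q$ as one of the allowed subsequences $(f(k^r n + a))_{n\ge0}$. Choosing $r$ large enough that $k^r \geq |Q|$ and that every state reachable from $q_0$ is reached by some word of length $r$ (using leading zeros, after noting $\delta(q_0, 0) = q_0$ if the automaton has the standard convention, or otherwise handling the leading-zero issue), each $g_q$ that is reachable equals some $(f(k^r n + a))_{n\ge0}$ with $0 \leq a < k^r$. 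Then picking $t > r$, each target subsequence $(f(k^t n + b))_{n \geq 0}$ with $0 \leq b < k^t$ equals $g_{\delta(q_0, b)}$ by the decimation identity, which is one of the spanning sequences; hence it is a (trivial) linear combination — a single basis element with coefficient $1$ — of the spanning set. This gives strong $k$-recursiveness directly.

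The step I expect to be the main obstacle is the \emph{leading-zero bookkeeping}: the definition requires the spanning sequences to have the form $(f(k^r n + a))_{n \geq 0}$ with $0 \leq a < k^r$, and not every state $q$ need be of the form $\delta(q_0, a)$ for a \emph{length-exactly-}$r$ word $a$ with $a < k^r$ unless I am careful about which states are reachable by words of a fixed length and about how leading zeros act. I would handle this by first replacing $M$ with an equivalent automaton in which reading a leading zero does not change the state (a standard normalization for automata reading $n = 0$ and ensuring $f$ is well-defined on all representations of $n$), so that $\delta(q_0, 0^j w) = \delta(q_0, w)$; then every reachable state is reached by a word of every sufficiently large length $r$ via zero-padding, and the identification of each $g_q$ with an allowed $(f(k^r n + a))_{n\ge0}$ goes through. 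With that normalization in place, the argument reduces to the decimation identity and a counting choice of $r$ and $t$, and no genuine linear algebra (beyond the trivial single-term combinations) is needed.
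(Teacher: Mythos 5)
Your overall strategy is essentially the paper's: read the input least-significant-digit first, so that each subsequence $(f(k^r n + a))_{n\geq 0}$ with $0 \leq a < k^r$ is literally the sequence $g_q$ run from the state $q$ reached by the length-$r$ padded representation of $a$, and then exhibit each level-$t$ subsequence as a single spanning sequence with coefficient $1$. But your third step has a genuine gap. The decimation identity $g_q(kn+d) = g_{\delta(q,d)}(n)$ forces the automaton to read the least significant digit first, and in that convention the digits read first are the \emph{low-order} digits of the input, not its leading zeros. The normalization you invoke, $\delta(q_0, 0^j w) = \delta(q_0, w)$, would therefore force $f(k^j m) = f(m)$ for all $m$ (take $w$ to be a representation of $m$: in LSD-first order the word $0^j w$ represents $k^j m$), which is false for most automatic sequences --- already for $f(n) = n \bmod 2$ with $k = 2$. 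The normalization $\delta(q_0,0)=q_0$ is standard only for \emph{most}-significant-digit-first automata, where your decimation identity fails. Moreover, the goal the normalization was meant to secure --- that every state reachable from $q_0$ is hit by some word of length exactly $r$ --- is unattainable in general: an LSD-first automaton whose state tracks the parity of the number of digits read so far (needed, e.g., to compute the parity of $\lfloor \log_k n\rfloor$) has the sets $S_r = \{\delta(q_0,x) \suchthat \abs{x} = r\}$ alternating between two disjoint sets, so no choice of $r$, and no equivalent machine, captures all reachable states at a single length.

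The missing idea --- and the paper's key step --- is that you do not need all reachable states at one level. The level-$t$ subsequences only involve states in $S_t$, so it suffices to find $r < t$ with $S_t \subseteq S_r$. Since there are at most $2^{\abs{Q}}$ distinct sets $S_t$, the pigeonhole principle yields $r < t$ with $S_r = S_t$, and then your own argument goes through verbatim: each $(f(k^t n + b))_{n\geq 0}$ equals $g_q$ for some $q \in S_t = S_r$, hence equals some $(f(k^r n + a))_{n\geq 0}$ with $0 \leq a < k^r$. With this one replacement --- pigeonhole over the sets $S_t$ in place of the unavailable leading-zero normalization --- your proof becomes exactly the paper's proof of its stronger Theorem~2, with no linear algebra needed, just as you anticipated.
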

As every strongly $k$-recursive sequence is $k$-recursive, we conclude that
every $k$-automatic sequence is also $k$-recursive.
The converse, however, is not true, as there are $k$-recursive and strongly
$k$-recursive sequences that are not bounded
(see~\cite{Heuberger&Krenn&Lipnik:2022}), but automatic sequences are
bounded (see~\cite{Allouche&Shallit:2003a}).

Theorem~\ref{theorem:automatic-is-strongly-recursive}
is trivially implied by the following stronger result.

\begin{theorem}
For every $k$-automatic sequence $(f(n))_{n\geq 0}$
there exist natural numbers $r<t$ such that
every subsequence of the form
$(f(k^t n + b))_{n \geq 0}$ with $0 \leq b < k^t$ is equal to some subsequence
of the form $(f(k^r n + a))_{n \geq 0}$ with $0 \leq a < k^r$.
\end{theorem}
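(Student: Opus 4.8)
The plan is to use the fundamental structure of $k$-automatic sequences: by the Myhill–Nerode-style characterization (see~\cite{Allouche&Shallit:2003a}), a sequence $(f(n))_{n\geq 0}$ is $k$-automatic if and only if its \emph{$k$-kernel}
$$ \{ (f(k^i n + j))_{n \geq 0} \suchthat i \geq 0,\ 0 \leq j < k^i \} $$
is \emph{finite}. Let $N$ denote the number of distinct sequences in this kernel. The whole point is that the kernel being finite forces two different "levels" $r$ and $t$ of subsequences to coincide, so I want to extract such a pair purely by a counting/pigeonhole argument on the kernel.

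First I would set up the following observation: for each level $i$, consider the ordered tuple of the $k^i$ kernel sequences $\bigl((f(k^i n + j))_{n\geq 0}\bigr)_{0 \leq j < k^i}$. The key monotonicity fact is that the \emph{set} of distinct sequences appearing at level $i+1$ contains the set appearing at level $i$: indeed, each level-$i$ sequence $(f(k^i n + j))_{n\geq 0}$ is itself the level-$(i+1)$ sequence obtained with shift $k\cdot j$ from the sequence $(f(k^{i+1} n + kj))_{n \geq 0}$, but more usefully, refining a subsequence $(f(k^i n + j))$ by splitting $n$ according to its residue mod $k$ shows each level-$i$ kernel element is recovered among level-$(i+1)$ elements. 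Hence, letting $D_i$ be the set of distinct sequences occurring among $\{(f(k^i n + j))_{n\geq 0} : 0 \leq j < k^i\}$, the sets $D_i$ are nested increasing, all contained in the finite kernel. Therefore the chain $D_0 \subseteq D_1 \subseteq D_2 \subseteq \cdots$ stabilizes.

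Now I would apply the pigeonhole principle. Since the $D_i$ are nested subsets of a set of size $N$, there is some index $r \leq N$ at which stabilization begins, i.e. $D_r = D_{r+1} = D_{r+2} = \cdots$. Pick $t = r+1$ (or any $t > r$). Then every level-$t$ subsequence $(f(k^t n + b))_{n\geq 0}$ belongs to $D_t = D_r$, which by definition is exactly the set of sequences realized by the level-$r$ subsequences $(f(k^r n + a))_{n\geq 0}$ with $0 \leq a < k^r$. Thus each level-$t$ subsequence \emph{equals} some level-$r$ subsequence, which is precisely the claim. Note $r < t$ and both are natural numbers, as required.

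The main obstacle, and the step I would be most careful about, is establishing the nestedness $D_i \subseteq D_{i+1}$ cleanly, since the naive indexing can go the wrong way. The correct direction comes from the identity $f(k^i n + j) = f(k^{i+1} m + (k^i r' + j))$ where one writes $n = km + r'$ with $0 \leq r' < k$; this expresses a level-$i$ subsequence as a further interleaving (not a single subsequence) of level-$(i+1)$ subsequences, so I must instead argue that every sequence occurring at level $i$ reoccurs at level $i+1$ via the substitution $n \mapsto kn + 0$, giving $f(k^i(kn) + j) = f(k^{i+1} n + j)$ with $0 \leq j < k^i \leq k^{i+1}$. This shows $(f(k^i n + j))_{n \geq 0}$ itself appears verbatim as the level-$(i+1)$ subsequence with shift $j$, establishing $D_i \subseteq D_{i+1}$ and making the stabilization argument go through. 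Once this inclusion is pinned down, the rest is a routine pigeonhole over the finite kernel.
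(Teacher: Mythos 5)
Your key lemma---the nestedness $D_i \subseteq D_{i+1}$---is false, and the step you flagged as delicate is exactly where the proof breaks. The identity $f(k^{i+1}n+j)=f(k^i(kn)+j)$ does \emph{not} show that $(f(k^i n+j))_{n\geq 0}$ ``appears verbatim'' at level $i+1$: it shows only that the level-$(i+1)$ sequence with shift $j$ is the subsequence of the level-$i$ sequence obtained by sampling the indices $n\equiv 0 \pmod k$, which is in general a different sequence. (Your first justification, via shift $k\cdot j$, fails for the same reason: interleavings and samplings of a sequence need not equal the sequence.) A concrete counterexample: take $k=2$ and $f(n)=n \bmod 2$, which is $2$-automatic. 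Then $D_0=\{(0,1,0,1,\dots)\}$, while $f(2n)=0$ and $f(2n+1)=1$ for all $n$, so $D_1=\{\mathbf{0},\mathbf{1}\}$ and $D_0 \not\subseteq D_1$. Consequently the chain need not be increasing, there is no ``stabilization index,'' and your choice $t=r+1$ is unsupported.

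The good news is that your pigeonhole strategy survives without any monotonicity, and the repair makes it essentially the paper's proof. Each $D_i$ is a subset of the finite $k$-kernel of size $N$, so there are at most $2^N$ possible sets $D_i$; hence there exist $r<t$ with $D_r=D_t$, and in particular $D_t\subseteq D_r$, which is verbatim the statement to be proved (every level-$t$ subsequence lies in $D_t=D_r$, i.e., equals some $(f(k^r n+a))_{n\geq 0}$ with $0\leq a<k^r$)---though you lose the bound $t=r+1$, which was never attainable anyway in view of the counterexample above. The paper argues exactly this way, but on the automaton side: it takes a least-significant-digit-first DFAO, lets $S_t$ be the set of states reachable by input words of length $t$, notes there are at most $2^{|Q|}$ such sets, and extracts $r<t$ with $S_t\subseteq S_r$. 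Since $D_i$ is precisely the set of residual sequences after reading $i$ least-significant digits, your kernel formulation is the mirror image of the paper's state-set formulation; once you replace the false stabilization argument by the direct pigeonhole on the $2^N$ subsets, the two proofs coincide in substance.
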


\begin{proof}
This follows from Theorem 1 of Cobham~\cite{Cobham:1972}, but
for completeness we give the following simpler proof.

Let the $k$-automatic sequence $(f(n))_{n \geq 0}$ be generated by the
least significant digit-first deterministic finite automaton with output $(Q, \Sigma_k, \Delta, \delta, q_0, \tau)$, where
$\Sigma_k = \{0,1,\ldots, k-1 \}$.  

For each
natural number $t$ consider the set%
\footnote{We write $\abs{x}$ for the length of a word $x \in \Sigma_k^*$.}
$$ S_t := \{ q \in Q \suchthat \exists x \in \Sigma_k^*,\ \abs{x} = t \text{ and } \delta(q_0, x) = q \}.$$
Then there are only $2^{\abs{Q}}$ distinct sets $S_t$, so there must exist
natural numbers $r < t$ such that
$S_t$ is a subset of $S_r$ .

Now consider the subsequence
$(f(k^t n + b))_{n \geq 0}$ for some integer~$b$ with $0 \leq b < k^t$.
Then there exists an~$x \in \Sigma_k^*$ with $\abs{x}=t$ and $x$ is
the standard base-$k$ representation of $b$.
Therefore, there is a $q \in S_t$ with $\delta(q_0, x) = q$. Consequently,
we have $q \in S_r$, so there is a~$y \in \Sigma_k^*$ with $\abs{y}=r$ and
$\delta(q_0, y) = q$. Let $a$ be the integer that $y$ represents in base~$k$.
Thus, for all $n \ge 0$, we conclude that $f(k^r n + a) = f(k^r n + b)$
by first reading either $x$ (representation of~$a$) or $y$ (representation of~$b$)
and then the representation of~$n$ in base~$k$.
Therefore, we have that every subsequence of the form
$(f(k^t n + b))_{n \geq 0}$, $0 \leq b < k^t$, is equal to some subsequence
of the form $(f(k^r n + a))_{n \geq 0}$, $0 \leq a < k^r$.
\end{proof}

\section{A counterexample sequence}

Before we actually study the relationship of the classes of
$k$-recursive and strongly $k$-recursive sequences to other classes,
we put our preparatory focus on a particular set of sequences (defined below).
We will derive some basic properties of these sequences---in particular,
that these sequences are $k$-regular. Moreover, we study when these sequences are
$k$-synchronized and therefore providing answers of some of the
introductory questions.

Throughout this section, let $k$, $\ell \geq 2$ be integers and define
$$ g_{k, \ell}(n)  = \begin{cases}
    1, & \text{if $n = 0$}; \\
    1+\ell^{\lfloor \log_k n \rfloor}, & \text{if $n > 0$.}
\end{cases}
$$
Note that $g_{k, \ell}(1) = 2$ (independently of $k$ and $\ell$).

We will begin our studies by the following properties stated in the
two lemmas below.
\begin{lemma}
Let $n \geq 0$ and $t \geq 0$. We have
$$ g_{k, \ell}(k^t n + b) = (1-\ell^j)g_{k, \ell}(k^t n) + \ell^j g_{k, \ell}(k^t n + 1)$$
for $k^j \leq b < k^{j+1}$ with $0 \leq j < t$.
\label{lem5}
\end{lemma}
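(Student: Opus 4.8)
The plan is to prove the identity by a direct computation, analyzing the value of $g_{k,\ell}$ on each of the three arguments that appear. The key observation is that the function $g_{k,\ell}$ depends on its argument only through the number of base-$k$ digits, via the quantity $\lfloor \log_k m \rfloor$, which counts one less than the number of digits of $m$. So the entire claim should reduce to a comparison of these floor-of-log values for the three inputs $k^t n + b$, $k^t n$, and $k^t n + 1$.

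First I would treat the trivial case $n = 0$ separately. When $n = 0$, the claimed identity becomes $g_{k,\ell}(b) = (1-\ell^j) g_{k,\ell}(0) + \ell^j g_{k,\ell}(1) = (1-\ell^j)\cdot 1 + \ell^j \cdot 2 = 1 + \ell^j$, and since $k^j \le b < k^{j+1}$ forces $\lfloor \log_k b \rfloor = j$ (note $b \ge k^j \ge k > 0$), the left side is indeed $1 + \ell^j$. This confirms the boundary case and also isolates the only place where the $n=0$ branch of the definition of $g_{k,\ell}$ intervenes.

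Next, for $n \ge 1$, all three arguments are strictly positive, so I can use the uniform formula $g_{k,\ell}(m) = 1 + \ell^{\lfloor \log_k m \rfloor}$ throughout. The right-hand side then simplifies to
\[
(1-\ell^j)\bigl(1+\ell^{\lfloor \log_k (k^t n)\rfloor}\bigr) + \ell^j \bigl(1 + \ell^{\lfloor \log_k (k^t n + 1)\rfloor}\bigr),
\]
and the heart of the matter is to show this equals $1 + \ell^{\lfloor \log_k(k^t n + b)\rfloor}$. The main work is therefore bookkeeping of the three floor-of-log terms. I would prove the key claim that for $n \ge 1$ and $0 \le j < t$ with $k^j \le b < k^{j+1}$, we have
\[
\lfloor \log_k(k^t n)\rfloor = \lfloor \log_k(k^t n + 1)\rfloor = \lfloor \log_k(k^t n + b)\rfloor = t + \lfloor \log_k n\rfloor.
\]
This follows because $\log_k(k^t m) = t + \log_k m$ shifts the floor by exactly $t$, and because adding $b < k^{j+1} \le k^t$ to $k^t n$ cannot push the argument past the next power of $k$: since $n \ge 1$, the number $k^t n$ already has at least $t+1$ digits, and adding a quantity smaller than $k^t$ leaves $\lfloor \log_k \rfloor$ unchanged (one should check $k^t n + b < k^{t+1+\lfloor \log_k n\rfloor}$, which holds as $k^t(n+1) \le k^{t + 1 + \lfloor \log_k n \rfloor}$). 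Once all three exponents coincide, write $e := t + \lfloor \log_k n\rfloor$; the right-hand side collapses to $(1 - \ell^j) + \ell^j + \bigl((1-\ell^j)\ell^e + \ell^j \ell^e\bigr) = 1 + \ell^e$, matching the left side.

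I expect the main obstacle to be the floor-of-logarithm estimate, specifically verifying that adding $b$ (or $1$) does not increment $\lfloor \log_k \rfloor$. The inequality $b < k^{j+1} \le k^t$ uses the hypothesis $j < t$ in an essential way, and care is needed that the argument $k^t n + b$ stays strictly below $k^{e+1}$; this is exactly where the constraint $j < t$ earns its keep, and it is worth recording explicitly since the whole identity fails without it.
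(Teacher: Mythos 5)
Your proof is correct and takes essentially the same route as the paper: treat $n=0$ by direct evaluation using $g_{k,\ell}(0)=1$, $g_{k,\ell}(1)=2$, and for $n\ge 1$ show that $\lfloor \log_k(k^t n + b)\rfloor = \lfloor \log_k(k^t n)\rfloor = \lfloor \log_k(k^t n + 1)\rfloor = t + \lfloor \log_k n\rfloor$ (using $b < k^{j+1} \le k^t$, i.e., the hypothesis $j<t$), so that all three $g$-values coincide and the identity follows since $(1-\ell^j)+\ell^j=1$. (One cosmetic slip: for $j=0$ the chain $b \ge k^j \ge k$ fails since $k^0 = 1$, but $b \ge 1 > 0$ is all you need there.)
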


\begin{proof}
Suppose $n \geq 1$.  Let $s \geq 0$ be such that $k^s \leq n < k^{s+1}$. We have
$1 \leq b < k^t$; therefore, for each such $b$ as well as $b=0$, we obtain
$\lfloor \log_k (k^t n + b) \rfloor = t+s$,
so $g_{k, \ell}(k^t n + b) = 1+ \ell^{t+s}$.
In particular, we get $g_{k, \ell}(k^t n) = g_{k, \ell}(k^t n + 1) = g_{k, \ell}(k^t n + b)$,
and the result follows.

Suppose $n = 0$.   Then $\lfloor \log_k (k^t n + b) \rfloor = \lfloor \log_k (b) \rfloor = j$,
so $g_{k, \ell}(k^t n + b) = g_{k, \ell}(b) = 1+ \ell^j$,
while $g_{k, \ell}(k^t n) = g_{k, \ell}(0) = 1$ and $g_{k, \ell}(k^t n + 1) = g_{k, \ell}(1) = 2$, and
the result follows.
\end{proof}

\begin{lemma}\label{lem:g-of-k2n}
Let $n \geq 0$.
We have
\begin{subequations}
\begin{align}
g_{k,\ell} (k^2 n) &= -\ell g_{k, \ell} (n) + (\ell + 1) g_{k,\ell} (kn),  \label{eq1}\\
g_{k,\ell} (k^2 n + a) &= - \ell g_{k, \ell} (n) + \ell g_{k,\ell} (kn) + g_{k,\ell} (kn+1) \quad \text{for $1 \leq a < k$},  \label{eq2}\\
g_{k,\ell} (k^2n + b) &= -\ell g_{k, \ell} (n) + g_{k, \ell} (kn) + \ell g_{k,\ell} (kn+1) \quad \text{for $k \leq b < k^2$}. \label{eq3}
\end{align}
\end{subequations}
\end{lemma}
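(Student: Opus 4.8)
The plan is to prove all three identities by a direct case distinction on whether $n = 0$ or $n \geq 1$, in each case reducing the claim to a short polynomial identity in $\ell$; as an alternative, once \eqref{eq1} and \eqref{eq2} are established, \eqref{eq3} can be bootstrapped from Lemma~\ref{lem5}. First I would handle the main case $n \geq 1$. Set $s = \lfloor \log_k n \rfloor$, so that $k^s \leq n < k^{s+1}$. The key preliminary step is to pin down the exponent occurring in $g_{k,\ell}$ for every argument on the left-hand sides. Multiplying the inequalities by $k$ and by $k^2$ gives $k^{s+1} \leq kn < k^{s+2}$ and $k^{s+2} \leq k^2 n < k^{s+3}$, whence $\lfloor \log_k(kn) \rfloor = s+1$ and $\lfloor \log_k(k^2 n) \rfloor = s+2$. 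Using $n \leq k^{s+1}-1$ one checks that $kn+1 < k^{s+2}$ and that $k^2 n + c < k^{s+3}$ for every $0 \leq c < k^2$, so that $\lfloor \log_k(kn+1)\rfloor = s+1$ and $\lfloor \log_k(k^2 n + c)\rfloor = s+2$ throughout.

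Substituting the resulting values $g_{k,\ell}(n) = 1 + \ell^s$, $g_{k,\ell}(kn) = g_{k,\ell}(kn+1) = 1 + \ell^{s+1}$, and $g_{k,\ell}(k^2 n + c) = 1 + \ell^{s+2}$, each of \eqref{eq1}--\eqref{eq3} collapses to the single assertion that its right-hand side equals $1 + \ell^{s+2}$; this is a one-line expansion in which the lower-order powers of $\ell$ cancel. Next I would dispose of the base case $n = 0$ by direct evaluation from the definition: $g_{k,\ell}(0) = g_{k,\ell}(k^2 \cdot 0) = 1$ and $g_{k,\ell}(k \cdot 0 + 1) = g_{k,\ell}(1) = 2$, while for $1 \leq a < k$ we have $\lfloor \log_k a \rfloor = 0$ and so $g_{k,\ell}(a) = 2$, and for $k \leq b < k^2$ we have $\lfloor \log_k b \rfloor = 1$ and so $g_{k,\ell}(b) = 1 + \ell$. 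Plugging these into \eqref{eq1}--\eqref{eq3} produces three numerical identities in $\ell$ that are immediate to verify.

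The only place requiring genuine care is the evaluation of $\lfloor \log_k(kn+1)\rfloor$ and of $\lfloor \log_k(k^2 n + c)\rfloor$: one must be sure that adding $1$, or adding an offset $c < k^2$, never pushes the argument past the next power of $k$, and this is exactly where the upper bound $n \leq k^{s+1}-1$ is used. As a more structural route to \eqref{eq3} (and to the fact that \eqref{eq2} does not depend on $a$), I would instead invoke Lemma~\ref{lem5} with $t = 2$: for $1 \leq a < k$ the relevant index is $j = 0$, giving $g_{k,\ell}(k^2 n + a) = g_{k,\ell}(k^2 n + 1)$, whereas for $k \leq b < k^2$ the index is $j = 1$, giving $g_{k,\ell}(k^2 n + b) = (1-\ell)\,g_{k,\ell}(k^2 n) + \ell\, g_{k,\ell}(k^2 n + 1)$. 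Substituting \eqref{eq1} and the $a=1$ instance of \eqref{eq2} into the latter and simplifying the coefficients of $g_{k,\ell}(n)$ and $g_{k,\ell}(kn)$ then yields \eqref{eq3}.
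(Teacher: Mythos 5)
Your proposal is correct and follows essentially the same route as the paper: a case split on $n=0$ versus $n\geq 1$, with the latter handled by fixing $s$ with $k^s \leq n < k^{s+1}$, evaluating each $g_{k,\ell}$ term via the floor-log, and reducing each identity to a cancellation of powers of $\ell$ (you are merely more explicit than the paper about verifying $\lfloor \log_k(kn+1)\rfloor = s+1$ and $\lfloor \log_k(k^2n+c)\rfloor = s+2$). Your alternative derivation of \eqref{eq3} from \eqref{eq1}, \eqref{eq2}, and Lemma~\ref{lem5} with $t=2$ is a valid minor variation but does not change the substance of the argument.
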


\begin{proof}
\leavevmode 
\ \vphantom{a} \ 

\eqref{eq1}:  If $n = 0$, then both sides are equal to $1$.
Otherwise, let $s \geq 0$ be such that $k^s \leq n < k^{s+1}$.
Then the left-hand side $g_{k,\ell} (k^2 n)$ evaluates to $1 + \ell^{s+2}$,
while $g_{k, \ell}(n) = 1+\ell^s$ and
$g_{k, \ell}(kn) = 1+\ell^{s+1}$.  

\eqref{eq2}:  If $n=0$, then both sides are equal to $2$.
Otherwise, let $s \geq 0$ be such that $k^s \leq n < k^{s+1}$.
Then the left-hand side $g_{k,\ell}(k^2n+a)$ evaluates to $1+\ell^{s+2}$,
while the right-hand side evaluates to
$-\ell(1+ \ell^s) + \ell(1+\ell^{s+1}) + (1 + \ell^{s+1}) = 1+\ell^{s+2}$.

\eqref{eq3}:  If $n=0$, then both sides equal $1+\ell$.
Otherwise, let $s \geq 0$ be such that $k^s \leq n < k^{s+1}$.
Then the left-hand side $g_{k,\ell}(k^2n+b)$ evaluates to $ 1+\ell^{s+2}$,
while the right-hand side evaluates to
$  -\ell(1+ \ell^s) + (1+\ell^{s+1}) + \ell(1+\ell^{s+1}) = 1 + \ell^{s+2}$.  
\end{proof}

We are now ready to prove the following proposition.

\begin{proposition}\label{proposition:g-regular}
The sequence $(g_{k,\ell} (n))_{n \geq 0}$ is $k$-regular.  
\end{proposition}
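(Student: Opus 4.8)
The plan is to exhibit a finite set of subsequences that spans, over linear combinations, all the $k$-sections of $(g_{k,\ell}(n))_{n\geq 0}$, which is exactly what $k$-regularity requires. The natural candidate set is
\begin{equation*}
S = \bigl\{ (g_{k,\ell}(n))_{n\geq 0},\ (g_{k,\ell}(kn))_{n\geq 0},\ (g_{k,\ell}(kn+1))_{n\geq 0} \bigr\}.
\end{equation*}
The reason this should suffice is that Lemma~\ref{lem5} already tells us that every residue $b$ with $k^j \le b < k^{j+1}$ collapses the section $(g_{k,\ell}(k^t n + b))_{n\geq 0}$ onto a combination of $(g_{k,\ell}(k^t n))_{n\geq 0}$ and $(g_{k,\ell}(k^t n + 1))_{n\geq 0}$, so the ``interior'' digits of $b$ carry no new information. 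Thus the only sections I really need to understand are those indexed by $b=0$ and $b=1$ at each level, i.e.\ $(g_{k,\ell}(k^t n))_{n\geq 0}$ and $(g_{k,\ell}(k^t n + 1))_{n\geq 0}$ for every $t$.

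First I would verify that $S$ spans every section arising at level $t=2$. Lemma~\ref{lem:g-of-k2n} does precisely this: equations~\eqref{eq1}, \eqref{eq2}, and \eqref{eq3} write each of $(g_{k,\ell}(k^2 n))$, $(g_{k,\ell}(k^2 n + a))$, and $(g_{k,\ell}(k^2 n + b))$ as an explicit linear combination of the three elements of $S$. Next I would argue that closing under the $k$-kernel operation does not enlarge $S$ beyond its span: I need to show that each of $(g_{k,\ell}(k\cdot(kn)))$, $(g_{k,\ell}(k\cdot(kn)+j))$, $(g_{k,\ell}(k\cdot(kn+1)))$, and $(g_{k,\ell}(k\cdot(kn+1)+j))$ again lies in the span of $S$. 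These are all sections of the form $(g_{k,\ell}(k^2 n + c))$ for suitable fixed $c$ with $0 \le c < k^2$, so they too are covered by Lemma~\ref{lem:g-of-k2n}, possibly after one application of Lemma~\ref{lem5} to reduce an interior digit. Putting these together by induction on $t$ shows that every section $(g_{k,\ell}(k^t n + b))_{n\geq 0}$ with $0\le b < k^t$ lies in the $\QQ$-span (indeed $\ZZ[\ell]$-span) of $S$, which is the definition of $k$-regularity.

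Concretely, the induction is on $t$. The base cases $t=0$ and $t=1$ are immediate since $(g_{k,\ell}(n))$ and $(g_{k,\ell}(kn+b))$ for $b=0,1$ are in $S$ directly, and the remaining residues at level $1$ reduce via Lemma~\ref{lem5}. For the inductive step, write $k^{t+1}n + b = k^t(kn) + b'$ by splitting off the leading digit of $b$, or alternatively peel one digit from the bottom so that a section at level $t+1$ becomes a section of one of the three generators at level $t$; either way the inductive hypothesis together with Lemmas~\ref{lem5} and~\ref{lem:g-of-k2n} expresses it inside the span of $S$.

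The main obstacle I anticipate is bookkeeping rather than conceptual: I must make sure the reduction is genuinely closed, i.e.\ that no section outside the span of the three chosen generators ever appears when I iterate. The subtle point is the behavior at the boundary $n=0$, where $g_{k,\ell}$ is defined by the exceptional value $g_{k,\ell}(0)=1$ rather than $1+\ell^{\lfloor\log_k n\rfloor}$; both lemmas have already been proved to hold for $n=0$ as well, so the recursions remain valid at the initial index and no separate correction terms are needed. Once that is confirmed, the finiteness of $S$ and the explicit linear relations finish the proof with no estimates required.
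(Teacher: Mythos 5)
Your proposal is correct and takes essentially the same route as the paper: the same three generators $(g_{k,\ell}(n))_{n\geq 0}$, $(g_{k,\ell}(kn))_{n\geq 0}$, $(g_{k,\ell}(kn+1))_{n\geq 0}$, with Lemma~\ref{lem:g-of-k2n} serving as the base of an induction on $t$ and Lemma~\ref{lem5} collapsing general residues onto the $b=0$ and $b=1$ sections. The paper's inductive step just splits $k^t n = k^2(k^{t-2}n)$ and applies \eqref{eq1} and \eqref{eq2} to drop two levels at once, rather than peeling single digits of $b$ as you do---a difference of bookkeeping only.
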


\begin{proof}
We show that every subsequence of the form
$(g(k^t n + b))_{n \geq 0}$ with $0 \leq b < k^t$
is a linear combination of the three sequences
$(g(n))_{n \geq 0}$, $(g(kn))_{n \geq 0}$, and
$(g(kn+1))_{n \geq 0}$ by an induction on~$t$.
We use Lemma~\ref{lem:g-of-k2n} as a base.

Now, let $t > 2$ and consider $(g(k^t n + b))_{n \geq 0}$ with $0 \leq b < k^t$.
By Lemma~\ref{lem5}, this sequence is a linear combination of
$(g(k^t n))_{n \geq 0}$ and
$(g(k^t n + 1))_{n \geq 0}$.
First splitting $k^tn = k^2(k^{t-2})n$, then applying Lemma~\ref{lem:g-of-k2n}---specifically we use relations \eqref{eq1} and \eqref{eq2}---and then the induction hypothesis, completes the proof.
\end{proof}

Let us now focus on $k$-synchronized sequences. We show the following
proposition.
\begin{proposition}\label{proposition:g-is-synchronized}
The sequence $(g_{k,\ell}(n)))_{n \geq 0}$ is $k$-synchronized iff $k = \ell$.
\end{proposition}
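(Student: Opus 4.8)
The plan is to prove the two implications separately, handling $k=\ell$ by an explicit construction and $k\neq\ell$ by two different arguments according to whether $\ell>k$ or $\ell<k$ (recall $k,\ell\geq 2$, so these are the only possibilities). For the direction $k=\ell\Rightarrow$ synchronized, I would first rewrite $g_{k,k}(n)=1+k^{\lfloor\log_k n\rfloor}$ and observe that, writing $d=\lfloor\log_k n\rfloor+1$ for the number of base-$k$ digits of $n\geq 1$, the value $g_{k,k}(n)=1+k^{d-1}$ has base-$k$ representation $1\,0^{d-2}\,1$ and hence length exactly $d=|n|_k$ (for $d\geq 2$). Since the target then has the same length as $n$, there is no padding, and I would construct a deterministic finite automaton reading $(n,m)_k$ most-significant-digit first that verifies: the first digit of $m$ is $1$, the last digit of $m$ is $1$, all intermediate digits of $m$ are $0$, and the leading digit of $n$ is nonzero (to certify $|n|=d$); the finitely many boundary inputs ($n=0$ and the one-digit $n$, together with the outlier $k=2,\ n=1$) are accepted by hand. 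Equivalently, and more cleanly for the writeup, I would note that $n\mapsto k^{\lfloor\log_k n\rfloor}$ (the largest power of $k$ not exceeding $n$) is synchronized, that the relation $x\mapsto x+1$ is synchronized, and that synchronized relations are closed under composition, so the graph of $g_{k,k}$ is synchronized.

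For the direction $\ell>k\Rightarrow$ not synchronized, I would invoke the standard fact (see, e.g., \cite{Shallit:2021h}) that every $k$-synchronized sequence $f$ satisfies $f(n)=O(n)$. Taking $n=k^d$ gives $g_{k,\ell}(k^d)=1+\ell^{d}$, so that $g_{k,\ell}(k^d)/k^d\approx(\ell/k)^d\to\infty$; this super-linear growth contradicts synchronization, so $g_{k,\ell}$ cannot be synchronized when $\ell>k$.

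For the direction $\ell<k\Rightarrow$ not synchronized, the growth is only linear, so I would argue by pumping. Suppose the graph $R=\{(n,m)_k : m=g_{k,\ell}(n)\}$ is accepted by a DFA $M$ with $N$ states, reading most-significant-digit first. For every $d$-digit $n$ we have $g_{k,\ell}(n)=m_d:=1+\ell^{d-1}$; since $\ell<k$, the length $e_d:=|m_d|_k$ satisfies $e_d<d$ and $d-e_d\to\infty$, so I may fix $d$ with $d-e_d>N$. I then take $n=k^d-1$, whose base-$k$ digits are all $k-1$; the padded pair has $n$-side $(k-1)^d$ and $m$-side $0^{\,d-e_d}w_d$, with more than $N$ leading zeros on the $m$-side. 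By the pigeonhole principle $M$ has a loop read entirely inside this leading-zero block, on letters of the form $(k-1,0)$. Pumping this loop inserts further $(k-1,0)$ pairs and yields an accepted pair $(n',m_d)$ in which $n'=k^{d'}-1$ for some $d'>d$ (its leading digit is still $k-1\neq 0$, so $|n'|=d'$) while the $m$-side still represents $m_d$. Acceptance forces $g_{k,\ell}(n')=m_d$, that is $1+\ell^{d'-1}=1+\ell^{d-1}$ with $d'>d$ and $\ell\geq 2$, a contradiction.

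The step I expect to be the main obstacle is the $\ell<k$ case, where one must engineer the pumped segment so that it simultaneously lies in the leading-zero (padded) block of the shorter coordinate $m$ and genuinely changes the length of the longer coordinate $n$. Choosing $n=k^d-1$, with all digits equal to $k-1$, rather than something like $n=k^{d-1}$, is precisely what guarantees that pumping increases $|n|$ (and hence changes $g_{k,\ell}(n)$) instead of merely inserting leading zeros into $n$; verifying this choice, together with the estimate $d-e_d\to\infty$, is the crux. The remaining ingredients---the explicit automaton for $k=\ell$ and the linear-growth bound for $\ell>k$---are routine or citable.
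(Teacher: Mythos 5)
Your proposal is correct, and the $k=\ell$ direction matches the paper: the paper likewise observes that for $n$ with base-$k$ representation of length $L$, the value $g_{k,k}(n)$ has representation $1\,0^{L-2}\,1$, and exhibits explicit most-significant-digit-first automata (with the small cases handled separately, exactly as you propose to do ``by hand''). Where you genuinely diverge is the converse. The paper dispatches \emph{both} cases $\ell>k$ and $\ell<k$ with a single citation, namely the dichotomy of \cite[Theorem~8]{Shallit:2021h}: every $k$-synchronized sequence is either $O(1)$, or else is $O(n)$ \emph{and} $\geq cn$ infinitely often. Since $g_{k,\ell}(n)=\Theta(n^{\log_k \ell})$, the case $\ell>k$ violates the upper bound and the case $\ell<k$ violates the lower bound (the sequence is unbounded yet $o(n)$), so one sentence covers everything. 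You instead use only the weaker, more widely known fact $f(n)=O(n)$ for $\ell>k$, and for $\ell<k$ you replace the citation with a self-contained pumping argument. That argument is sound: choosing $n=k^d-1$ (all digits $k-1$) with $d-e_d>N$ ensures the loop you pump lies in the padding block $(k-1,0)^{d-e_d}$, so pumping strictly increases $|n'|$ while leaving the $m$-side value $1+\ell^{d-1}$ unchanged, and the pumped word is still the canonical padded pair $(n',m_d)_k$, forcing $1+\ell^{d'-1}=1+\ell^{d-1}$ with $d'>d$, a contradiction; your identification of $n=k^d-1$ rather than $n=k^{d-1}$ as the crux is exactly right, since pumping the latter would merely insert digits that do not certify a length increase of $n$ past its leading $1$. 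The trade-off: the paper's route is shorter but leans on the full strength of the cited dichotomy (in particular the ``$\geq cn$ infinitely often'' half, which is the less standard part), whereas your route is longer but elementary and makes the sublinear case transparent---in effect you reprove the relevant instance of that half of the theorem. One small point of care if you write it up: the paper's definition of $(n,m)_k$ pads the shorter component with leading zeros but does not fix a reading direction, so either state that you may assume MSD-first reading (regular languages are closed under reversal) or pump a repeated state inside the long constant block directly, which works in either direction.
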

We discuss the implications of this result in Section~\ref{sec:summary}.

\begin{proof}
Suppose $k = \ell$.
For $k = 2$ this is demonstrated by the
(most-significant-digit first) deterministic finite automaton in Figure~\ref{fig2}, and for
$k > 2$ by the automaton in Figure~\ref{figk}.  As usual, accepting states are denoted by double circles, and
in both cases unspecified transitions go to a ``dead
state'' that is non-accepting.

We distinguish two cases. First, the values $g_{k,k}(0)=1$ and $g_{k,k}(a)=2$ for $0<a<k$
are represented by the path from state $0$ via $3$ to $2$ in Figure~\ref{fig2} for $k=2$ and
directly from state $0$ to $2$ in Figure~\ref{figk} for $k>2$.
Second, all other values $g_{k,k}(n)$ are represented by
the path from state $0$ via $1$ to $2$ in both Figures~\ref{fig2} and~\ref{figk}.
Note that for any $n$ whose base-$k$ representation is of length~$L$
we have that the base-$k$ representation of $g_{k,k}(n)$ is
$10^{L-2}1$.

\begin{figure}[H]
\begin{center}
    \begin{tikzpicture} [scale=2, >=latex, thick, every node/.style={align=center}]
    \node (q0) at (0,0) [state, initial, initial text = {}] {$0$};
    \node (q1) at (3,0.5) [state] {$1$};
    \node (q2) at (6,0) [state, accepting] {$2$};
    \node (q3) at (3,-0.5) [state, accepting] {$3$};
 
    \path  [-latex]
        (q0) edge [loop above] node {$[0,0]$} ()
        (q0) edge node [above, sloped] {$[1,1]$} (q1)
        (q1) edge node [above, sloped] {$[0,1]$\\$[1,1]$}   (q2)
        (q1) edge [loop above] node {$[0,1]$\\$[1,1]$} ()
        (q0) edge node [below, sloped] {$[0,1]$} (q3)
        (q3) edge node [below, sloped] {$[1,0]$} (q2);
    \end{tikzpicture} 
\end{center}
\caption{$g_{2,2}(n)$ is $2$-synchronized.}
\label{fig2}
\end{figure}
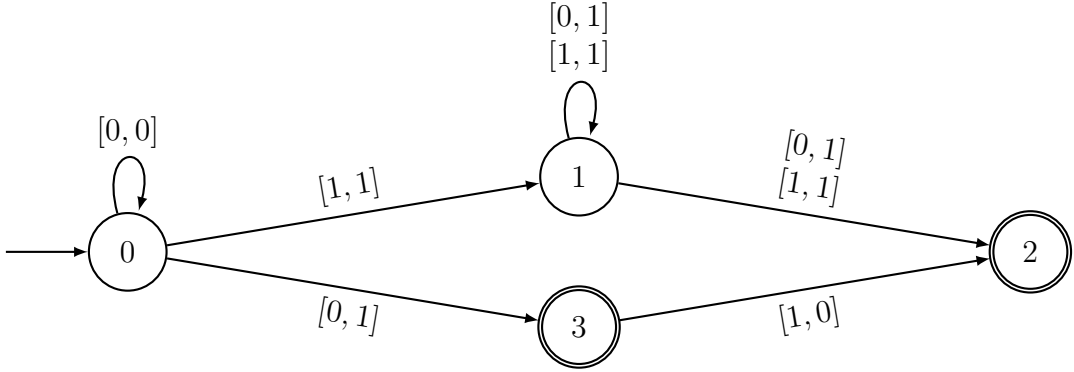

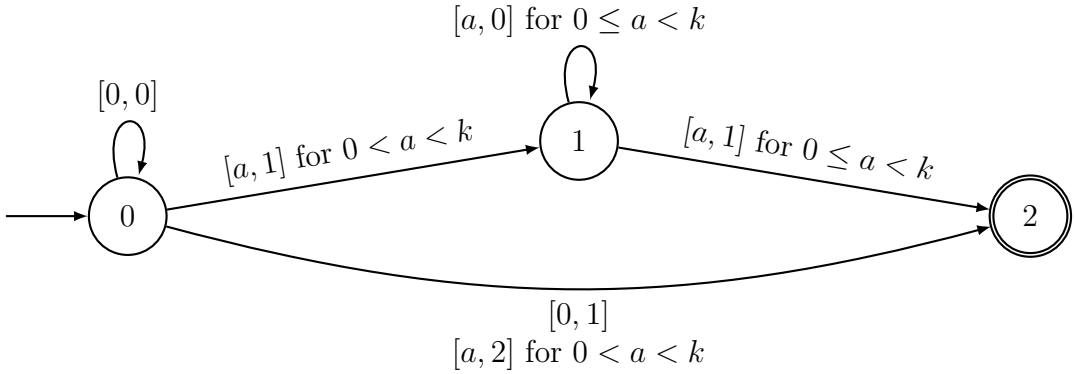
\begin{figure}[H]
\begin{center}
    \begin{tikzpicture} [scale=2, >=latex, thick, every node/.style={align=center}]
    \node (q0) at (0,0) [state, initial, initial text = {}] {$0$};
    \node (q1) at (3,0.5) [state] {$1$};
    \node (q2) at (6,0) [state, accepting] {$2$};
 
    \path  [-latex]
        (q0) edge [loop above] node {$[0,0]$} ()
        (q0) edge node [above, sloped] {$[a,1]$ for $0<a<k$} (q1)
        (q1) edge node [above, sloped] {$[a,1]$ for $0\leq a<k$}   (q2)
        (q1) edge [loop above] node {$[a,0]$ for $0\leq a<k$} ()
        (q0) edge [bend right=15] node [below] {$[0,1]$\\$[a,2]$ for $0<a<k$} (q2);
    \end{tikzpicture}
\end{center}
\caption{$g_{k,k}(n)$ is $k$-synchronized, $k > 2$.}
\label{figk}
\end{figure}

Now suppose $k \not= \ell$.
It is known that every $k$-synchronized
sequence is either $O(1)$, or $O(n)$ and $\geq cn$ infinitely often; see~\cite[Theorem~8]{Shallit:2021h}.
However
$g_{k, \ell} (n) = \Theta(n^{\log_k \ell})$,
so if $k \not=\ell$ then $g_{k,\ell} (n)$ cannot be $k$-synchronized.
\end{proof}

\section{Not (strongly) $k$-recursive}

We continue our study of the sequence~$(g_{k,\ell}(n))_{n \geq 0}$. In
this section, we will show that this sequence is neither strongly
$k$-recursive (Theorem~\ref{theorem:g-not-strongly-recursive}) nor
$k$-recursive (Theorem~\ref{theorem:g-not-recursive}).
Moreover, at the end, we bring another example of a $k$-regular sequence
that is not strongly $k$-recursive.

\begin{theorem}\label{theorem:g-not-strongly-recursive}
The sequence $(g_{k,\ell}(n))_{n \geq 0}$ is not
strongly $k$-recursive.
\end{theorem}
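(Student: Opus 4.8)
The plan is to argue by contradiction, exploiting the rigid ``layered'' form of $g := g_{k,\ell}$: the value $g(m) = 1 + \ell^{\lfloor \log_k m \rfloor}$ depends only on the number of base-$k$ digits of $m$, so any two arguments with the same number of digits receive the same value. Suppose, then, that $g$ were strongly $k$-recursive. By definition there are natural numbers $r < t$, integers $0 \le L < U \le k^t$ (these bounds encode the ``strong'' requirements $n_0 = 0$, $L \ge 0$, and $U \le k^t$), and constants $(c_a)_{L \le a < U}$ such that, specializing the defining relation to the block $b = 0$,
\begin{equation}
g(k^t n) = \sum_{L \le a < U} c_a \, g(k^r n + a) \qquad \text{for all } n \ge 0. \label{eq:plan-assume}
\end{equation}
A contradiction from this single relation already suffices.

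The heart of the argument is to evaluate \eqref{eq:plan-assume} along the powers $n = k^s$ with $s \ge t - r$, chosen so large that every term lands in one predictable layer. First I would record the elementary fact that $\lfloor \log_k(k^i + c) \rfloor = i$ whenever $0 \le c < k^i$ (since then $k^i \le k^i + c < 2k^i \le k^{i+1}$), so that $g(k^i + c) = 1 + \ell^i$ on that range. On the left of \eqref{eq:plan-assume} this gives $g(k^{t+s}) = 1 + \ell^{t+s}$. On the right, the strong hypotheses force $0 \le a < k^t \le k^{r+s}$---this last inequality is exactly where $s \ge t-r$ enters---so that $g(k^{r+s} + a) = 1 + \ell^{r+s}$ for \emph{every} admissible shift $a$ simultaneously. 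Setting $C := \sum_{L \le a < U} c_a$, relation \eqref{eq:plan-assume} therefore collapses to
\begin{equation}
1 + \ell^{t+s} = C\,(1 + \ell^{r+s}) \qquad \text{for all } s \ge t - r. \label{eq:plan-collapse}
\end{equation}

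Finally I would read off the contradiction from \eqref{eq:plan-collapse} by comparing two instances. Taking $s = t-r$ and $s = t-r+1$ and eliminating the single unknown $C$ yields the identity $(1+\ell^{2t-r})(1+\ell^{t+1}) = (1+\ell^{2t-r+1})(1+\ell^{t})$; expanding, its two sides differ by $(\ell-1)\,\ell^{t}\,(1 - \ell^{t-r})$, which is nonzero because $\ell \ge 2$ and $r < t$. This is the desired contradiction. The one point I would treat most carefully---and the only real obstacle---is the layer bookkeeping on the right-hand side: I must be certain that no shifted argument $k^{r+s} + a$ spills into an adjacent layer, and it is precisely the ``strong'' constraints $L \ge 0$ and $U \le k^t$ that prevent this. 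A negative shift could lower the exponent and an unbounded one could raise it, breaking the clean form \eqref{eq:plan-collapse}; this is exactly why the general $k$-recursive case (Theorem~\ref{theorem:g-not-recursive}) demands a genuinely different argument.
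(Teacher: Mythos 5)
Your proof is correct, but it takes a different route from the paper's proof of this particular theorem. The paper first invokes Lemma~\ref{lem5} to collapse the assumed relation into a combination of just the two sequences $(g_{k,\ell}(k^r n))_{n\ge0}$ and $(g_{k,\ell}(k^r n+1))_{n\ge0}$, and then evaluates at the small arguments $n=1$ and $n=k$ to eliminate $c_0+c_1$ and force $r=t$. You skip the reduction entirely and instead evaluate at $n=k^s$ for $s\ge t-r$, chosen so that every shifted argument $k^{r+s}+a$ with $0\le a<U\le k^t\le k^{r+s}$ stays in a single layer, collapsing the whole sum to $C(1+\ell^{r+s})$; comparing two consecutive values of $s$ then eliminates $C$ and your final identity, whose sides differ by $(\ell-1)\ell^{t}(1-\ell^{t-r})\ne0$, is verified correctly. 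In effect you have anticipated the paper's own proof of the stronger Theorem~\ref{theorem:g-not-recursive}, where exactly this large-$n$, single-layer evaluation is used (there with $n=k^{s-1}(k+1)$ rather than $n=k^s$, since negative shifts $a$ must be kept away from the layer boundary --- a point you correctly identify in your closing remark, though with $n=k^{s-1}(k+1)$ the ``genuinely different argument'' you predict is needed turns out to be only a small perturbation of yours). Your route also has a concrete advantage in generality: it handles all shifts $L\le a<U\le k^t$ directly, including $k^r\le a<k^t$, which the paper's reduction via Lemma~\ref{lem5} does not literally cover as written (the lemma applies to shifts below the modulus $k^r$); moreover you never use $n_0=0$, only $L\ge0$ and the finiteness of $U$. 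What the paper's approach buys in exchange is reuse of Lemma~\ref{lem5} and evaluation at tiny arguments, making that proof shorter on the page.
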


\begin{proof}
Let $r < t$ and assume, contrary to what we want to prove, that $(g_{k,\ell}(k^t n))_{n \geq 0}$ is a linear combination
of $(g_{k,\ell}(k^r n + a))_{n \geq 0}$ for $0 \leq a < k^r$.
Then from Lemma~\ref{lem5}
it follows that $(g_{k,\ell}(k^t n))_{n \geq 0}$ would be a linear
combination of $(g_{k,\ell}(k^r n))_{n \geq 0}$ and $(g_{k,\ell}(k^r n + 1))_{n \geq 0}$,
say $g_{k,\ell}(k^t n) = c_0 g_{k,\ell}(k^r n) + c_1 g_{k,\ell}(k^r n + 1)$ for all $n\geq0$.

By setting $n = 1$ and $n = k$ we get
the system
\begin{alignat*}{3}
    \ell^t + 1     &= c_0 (\ell^r + 1)     &&+ c_1 (\ell^r + 1)     &&= (c_0 + c_1) (\ell^r + 1), \\
    \ell^{t+1} + 1 &= c_0 (\ell^{r+1} + 1) &&+ c_1 (\ell^{r+1} + 1) &&= (c_0 + c_1) (\ell^{r+1} + 1).
\end{alignat*}
As $\ell^r+1>0$ and $\ell^{r+1}+1>0$, we have
\begin{equation*}
    \frac{\ell^t + 1}{\ell^r+1}
    = c_0 + c_1 =
    \frac{\ell^{t+1} + 1}{\ell^{r+1} + 1},
\end{equation*}
which implies $r = t$, a contradiction.
\end{proof}

We can modify and extend the technique in the proof of
Theorem~\ref{theorem:g-not-strongly-recursive}, so that we can also
show the corresponding result on $k$-recursive sequences.

\begin{theorem}\label{theorem:g-not-recursive}
The sequence $(g_{k,\ell}(n))_{n \geq 0}$ is not $k$-recursive.
\end{theorem}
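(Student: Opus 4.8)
The plan is to generalize the argument used for Theorem~\ref{theorem:g-not-strongly-recursive} by allowing the index shifts $a$ to range over the wider set $L \leq a < U$ permitted by the definition of $k$-recursive, and by replacing ``for all $n \geq 0$'' with ``for all $n \geq n_0$.'' First I would invoke Lemma~\ref{lem5} to collapse the general linear combination. The key observation is that Lemma~\ref{lem5} holds for all $n \geq 0$ and all $t \geq 0$, and in particular it tells us that for any fixed $n \geq 1$ and any shift $a$, the value $g_{k,\ell}(k^r n + a)$ depends only on whether $a=0$ or $1 \leq a < k^r$ (since for $n \geq 1$ all of $g_{k,\ell}(k^r n), g_{k,\ell}(k^r n + 1), \dots, g_{k,\ell}(k^r n + a)$ with $1 \le a < k^r$ agree). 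Thus on the range $n \geq 1$, every subsequence $(g_{k,\ell}(k^r n + a))$ with $1 \leq a < k^r$ is identical, and the ones with $a < 0$ or $a \geq k^r$ can similarly be rewritten via Lemma~\ref{lem5} applied with a possibly shifted base point. The upshot is that, restricting to large $n$, the linear combination expressing $(g_{k,\ell}(k^t n))_{n \geq n_0}$ again reduces to a combination of just two essentially distinct sequences, namely $(g_{k,\ell}(k^r n))$ and $(g_{k,\ell}(k^r n + 1))$ (equivalently, the ``$a=0$'' column and the ``$a \geq 1$'' column), yielding $g_{k,\ell}(k^t n) = c_0 g_{k,\ell}(k^r n) + c_1 g_{k,\ell}(k^r n + 1)$ for all $n \geq n_0$.

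Once this reduction is in place, the endgame mirrors Theorem~\ref{theorem:g-not-strongly-recursive} almost verbatim. I would evaluate the identity at two suitable points $n = k^m$ and $n = k^{m+1}$ with $m$ chosen large enough that $k^m \geq n_0$; then $\lfloor \log_k(k^r n)\rfloor = r+m$ and $\lfloor \log_k(k^t n)\rfloor = t+m$, so the identity becomes
\begin{equation*}
\ell^{t+m} + 1 = (c_0 + c_1)(\ell^{r+m} + 1),
\end{equation*}
and the analogous equation with $m$ replaced by $m+1$. Dividing, as in the earlier proof, forces
\begin{equation*}
\frac{\ell^{t+m}+1}{\ell^{r+m}+1} = \frac{\ell^{t+m+1}+1}{\ell^{r+m+1}+1},
\end{equation*}
and cross-multiplying and simplifying yields $r = t$ (equivalently $\ell^t = \ell^r$, hence $t = r$ since $\ell \geq 2$), contradicting $r < t$.

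The main obstacle, and the part requiring genuine care rather than a routine copy of the previous proof, is the reduction step: in the $k$-recursive definition the integers $a$ may be negative (since only $L < U$ is required, not $L \geq 0$) and may be as large as $U - 1 \geq k^r$, so the shifts can fall outside a single ``window'' $[0, k^r)$. I would handle this by noting that for $n \geq n_0$ sufficiently large, each $k^r n + a$ with $a$ in the finite fixed range $[L, U)$ still has $\lfloor \log_k(k^r n + a) \rfloor = r + m$ whenever $k^m \leq n < k^{m+1}$ and $m$ is large, so in fact \emph{every} one of these shifted sequences equals $1 + \ell^{r+m}$ at such points except that the $a$ corresponding to the exact value $k^r n$ gives the same thing; the only genuine distinction among the $g_{k,\ell}(k^r n + a)$ for large $n$ collapses away entirely, since for $n \geq 1$ we have $g_{k,\ell}(k^r n + a) = 1 + \ell^{r + \lfloor \log_k n\rfloor}$ independently of $a$ in the relevant range. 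This means the linear combination reduces even more strongly than before: on large $n$, $g_{k,\ell}(k^t n)$ equals $\bigl(\sum_a c_a\bigr)\,g_{k,\ell}(k^r n)$, and setting $C = \sum_a c_a$ the same two-point evaluation argument applies with $c_0 + c_1$ replaced by $C$, delivering the contradiction. I would double-check the boundary behavior for the finitely many small shifts to confirm that choosing $m$ large absorbs all edge cases coming from the negative or oversized values of $a$.
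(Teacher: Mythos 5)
Your overall strategy---collapse the right-hand side to a single constant $C=\sum_a c_a$ times a common value, then force $r=t$ by a two-point evaluation---is the same as the paper's, and your endgame algebra is correct. But there is a genuine gap at the decisive step: the choice of evaluation points. You substitute $n=k^m$ and $n=k^{m+1}$ and claim that for $m$ large every shift $a\in[L,U)$ satisfies $\lfloor\log_k(k^r n+a)\rfloor=r+m$. This fails, and no choice of large $m$ repairs it: since the definition of $k$-recursive allows $L<0$, for $n=k^m$ the point $k^r n=k^{r+m}$ sits \emph{exactly} on a power-of-$k$ boundary, so for every negative $a$ one has $k^r n+a<k^{r+m}$ and hence $\lfloor\log_k(k^r n+a)\rfloor=r+m-1$, regardless of how large $m$ is---the distance from $k^r n$ to the boundary is zero for all $m$, so your assertion that ``choosing $m$ large absorbs all edge cases'' is false precisely at the points you chose. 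Consequently, if any $c_a$ with $a<0$ is nonzero, your displayed equation $\ell^{t+m}+1=C(\ell^{r+m}+1)$ does not follow; the correct evaluation contains extra terms $c_a(\ell^{r+m-1}+1)$. The same boundary problem undermines your earlier suggestion to handle $a<0$ or $a\geq k^r$ via Lemma~\ref{lem5} ``with a shifted base point'': that rewrites $g_{k,\ell}(k^r n+a)$ in terms of $g_{k,\ell}(k^r(n\pm1))$ and $g_{k,\ell}(k^r(n\pm1)+1)$, which are different sequences, so the reduction to $c_0\,g_{k,\ell}(k^r n)+c_1\,g_{k,\ell}(k^r n+1)$ is not established either (though your final version bypasses that reduction anyway).

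The fix is exactly what the paper does: evaluate at \emph{interior} points of the intervals $[k^{r+s},k^{r+s+1})$, namely $n=k^{s-1}(k+1)$ and $n=k^{s}(k+1)$, so that $k^r n=k^{r+s}+k^{r+s-1}$ lies at distance $k^{r+s-1}$ from both boundaries; then for any $s$ with $-k^{r+s-1}\leq L$ and $U\leq k^{r+s-1}$ (a choice that also absorbs the cutoff $n_0$) every $k^r n+a$ stays in $[k^{r+s},k^{r+s+1})$, the equations $\ell^{t+s}+1=C(\ell^{r+s}+1)$ and $\ell^{t+s+1}+1=C(\ell^{r+s+1}+1)$ are legitimate, and your concluding computation then gives $C=1$ and $r=t$, the desired contradiction. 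As written, however, your argument proves the theorem only under the unwarranted assumption that the negative-shift coefficients vanish, which essentially reduces to the strongly $k$-recursive case already covered by Theorem~\ref{theorem:g-not-strongly-recursive}.
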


\begin{proof}
Let $r < t$ and assume, contrary to what we want to prove, that there are integers $L < U$ such that $(g_{k,\ell}(k^t n))_{n \geq 0}$ is a linear combination
of $(g_{k,\ell}(k^r n + a))_{n \geq 0}$ for $L \leq a < U$, say
\begin{equation}\label{eq:not-rec:relation}
  g_{k,\ell}(k^t n) = \sum_{L \leq a < U} c_a g_{k,\ell}(k^r n + a)
  \quad\quad\text{for all $n \geq 0$.}
\end{equation}
We set $n=k^{s-1}(k+1)$ and choose $s$ large enough such that
$\lfloor \log_k (k^r n + a) \rfloor = \lfloor \log_k (k^r n) \rfloor = r+s$
for all $a$ with~$L \leq a < U$.
Note that any $s$ with $-k^{r+s-1} \leq L$ and $U \leq k^{r+s-1}$ works, because then
\begin{equation*}
    k^{r+s} = k^r k^{s-1}(k+1) - k^{r+s-1}
    \leq k^r n + a <
    k^r k^{s-1}(k+1) + k^{r+s-1} = k^{r+s} + 2k^{r+s-1} \leq k^{r+s+1}.
\end{equation*}
Therefore, from~\eqref{eq:not-rec:relation} and the definition of
$g_{k,\ell}(n)$, we deduce that
\begin{subequations}
\begin{equation}\label{eq:not-rec:subs-s}
    \ell^{t+s} + 1 = \sum_{L \leq a < U} c_a \bigl(\ell^{r+s} + 1\bigr)
    = \bigl(\ell^{r+s} + 1\bigr) c
\end{equation}
with $c=\sum_{L \leq a < U} c_a$.
Repeating with $n=k^{s}(k+1)$ (same $s$ as chosen above) yields
\begin{equation}\label{eq:not-rec:subs-s+1}
    \ell^{t+s+1} + 1 = \bigl(\ell^{r+s+1} + 1\bigr) c.
\end{equation}
\end{subequations}
Multiplying~\eqref{eq:not-rec:subs-s} by $\ell$ and subtracting~\eqref{eq:not-rec:subs-s+1} yields
$\ell - 1 = (\ell - 1) c$, which implies $c=1$.

On the other hand, subtracting~\eqref{eq:not-rec:subs-s} from~\eqref{eq:not-rec:subs-s+1} and inserting~$c=1$ yields
$(\ell-1)\ell^{t+s} = (\ell-1)\ell^{r+s} c = (\ell-1)\ell^{r+s}$.
This implies $r=t$, a contradiction.
\end{proof}

We round off this section by presenting another interesting sequence
that is $k$-regular but not strongly $k$-recursive (for the particular
case $k=3$).

\begin{remark}
Define the sequence $(h(n))_{n \geq 0}$ by%
\footnote{We write $n \bmod m$ for the remainder of the division of $n$ by $m$
  that is $\ge0$ and $<m$. To avoid confusion when reading, we will also use the usual
  $a \equiv \modd{b} {m}$ for $a$ and $b$ being congruent modulo~$m$.}
$$ h(n) = \begin{cases}
0, & \text{if $n = 0$}; \\
h(\lfloor n/3 \rfloor) + (\lfloor n/3 \rfloor \bmod 2), & \text{if 
$n \equiv \modd{0} {3}$ or $n \equiv \modd{2} {3}$}; \\
h(\lfloor n/9 \rfloor) + 1, & \text{if $n \equiv \modd{1} {3}$}.
\end{cases}
$$
This sequence was introduced in~\cite{Allouche&Shallit:2003b}, 
where it was conjectured that
$h(n) = \nu_3 (d(n))$.
Here,
$$ d(n) = \sum_{0 \leq k \leq n} {n \choose k} {{n+k} \choose k} $$
are the {\it central Delannoy numbers} (sequence 
\seqnum{A001850} in the On-Line Encyclopedia of Integer Sequences).
This conjecture was proven quite recently by Zhao Shen~\cite{Shen:2022}.

A proof that $(h(n))_{n \geq 0}$ is $3$-regular can
be found in~\cite{Shen:2022}.  It follows from the fact
that 
$h(3n+b) = h(n) + ((n+a) \bmod 2)$
for all $n\geq0$ and $b \in \{0,1,2\}$, easily proved by induction on $n$. 
From this another induction gives  
$$h(3^t n + b) = \begin{cases}
h(n)+h(b), & \text{if $n \equiv \modd{0} {2}$;} \\
h(n)+t-h(b), & \text{if $n \equiv \modd{1} {2}$.}
\end{cases}
$$
for all $t$, $n \geq 0$ and $0 \leq b < 3^t$.  
Hence $(h(n))_{n \geq 0}$ satisfies the relations
\begin{align*}
h(3n+2) &= h(n) + (n \bmod 2) = h(3n+2) \\
h(9n) &= h(9n+6) = h(n) + 2(n \bmod 2) = -h(n) + 2h(3n) \\
h(9n+1) &= h(9n+4) = h(9n+7) = h(n)+1 = -h(n) + h(3n) + h(3n+1) \\
h(9n+3) &= h(n)+2(1-(n \bmod 2)) = -h(n) + 2h(3n+1) .
\end{align*}

To see that  $(h(n))_{n \geq 0}$ is not strongly $3$-recursive, we follow a similar plan as we did for
$g_{k,\ell}$.  Namely, we show that for 
$0 \leq b < 3^t$,
each sequence of the form $(h(3^t n + b))_{n \geq 0}$ can
be expressed as a linear combination of
the two sequences $(h(3^t n))_{n \geq 0} $
and $(h(3^t n + 1))_{n \geq 0}$.
In fact, we have
\begin{equation}
h(3^t n + b) = 
(1-h(b)) h(3^t n) +
h(b) h(3^t n+1)
\label{fund}
\end{equation}
for all $t$, $n \geq 0$
and $0 \leq b < 3^t$.
Finally,
the sequence $(h(3^t n))_{n \geq 0}$ cannot be expressed
as a linear combination of the sequences $(h(3^r n + a))_{n \geq 0}$,
$0 \leq a < 3^r$, for $r < t$
because if it could,
by \eqref{fund} there would
be constants $c_0, c_1$ such that
$h(3^t n) = c_0 h(3^r n) + c_1 h(3^r n + 1)$ for all
$n \geq 0$.

Substituting $n \in \{0, 1, 3\}$ we get the system of equations
\begin{align*}
0 &= c_1 \\
t+1 &= c_0 (r+1) + c_1 r \\
t+2 &= c_0 (r+2) + c_1 (r+1),
\end{align*}
which forces $c_0 = 1$ and $r = t$, a contradiction.
\end{remark}

\section{Summary of results}
\label{sec:summary}

We now collect and discuss the results.
In Table~\ref{tab1}, we illustrates all the various
possibilities for $k$-regular sequences. 

\begin{table}[H]
    \centering
    \begin{tabular}{c|c|c|c|c|c}
    $k$-automatic & $k$-synchronized & \makecell{strongly \\ $k$-recursive} & $k$-recursive & $k$-regular & example \\
    \hline
    \xmark & \xmark & \xmark & \xmark & \checkmark & $g_{k,\ell}(n)$ with $k \not=\ell$ \\
     \xmark & \xmark & \checkmark & \checkmark & \checkmark    &  $s_k(n)$  \\
     \xmark & \checkmark & \xmark & \xmark & \checkmark & $g_{k,k}(n)$ \\
     \xmark & \checkmark & \checkmark & \checkmark & \checkmark & $n$ \\
     \checkmark & \checkmark & \checkmark & \checkmark & \checkmark & \makecell{every \\automatic sequence}
    \end{tabular}
    \caption{Possibilities for $k$-regular sequences}
    \label{tab1}
\end{table}

Note that the columns ``strongly $k$-recursive'' and ``$k$-recursive''
in the table are the same; only partial information on the
relationship between the corresponding classes of sequences is known.

\appendix
	
\section{{\tt Walnut} code}
\label{sec:walnut-code}

The {\tt Walnut} code below can be used to verify the claims in
Example~\ref{examp1}.   To download the latest
version of {\tt Walnut}, visit\\
\centerline{\url{https://cs.uwaterloo.ca/~shallit/walnut.html} \ .}

\begin{verbatim}
reg power2 msd_2 "0*10*":

def tmsc "(n<=1 & z=n+1) |  Ex,r $power2(x) & x<n & n<=2*x & n=x+r &
   ((1<=r & r<=x/2) => z=3*x+4*(r-1)) & ((x/2 < r & r<=x) => z=4*x+2*(r-1))":

eval test0 "An,y0,y1,y2,y4 ($tmsc(8*n,y0) & $tmsc(8*n+1,y1) & 
   $tmsc(8*n+2,y2) & $tmsc(8*n+4,y4)) => $tmsc(16*n,(y0+3*y2)-(y1+y4))":

eval test1 "An,y2,y4 ($tmsc(8*n+2,y2) & $tmsc(8*n+4,y4)) 
   => $tmsc(16*n+1,(3*y2)-y4)":

eval test2 "An,y1,y2,y4 ($tmsc(8*n+1,y1) & $tmsc(8*n+2,y2) & 
   $tmsc(8*n+4,y4)) => $tmsc(16*n+2,(4*y2)-(y1+y4))":

eval test3 "An,y1,y2,y4 ($tmsc(8*n+1,y1) & $tmsc(8*n+2,y2) & 
   $tmsc(8*n+4,y4)) => $tmsc(16*n+3,(5*y2)-(2*y1+y4))":

eval test4 "An,y1,y2 ($tmsc(8*n+1,y1) & $tmsc(8*n+2,y2)) 
   => $tmsc(16*n+4,(3*y2)-y1)":

eval test5 "An,y1,y2 ($tmsc(8*n+1,y1) & $tmsc(8*n+2,y2)) 
   => $tmsc(16*n+5,(4*y2)-2*y1)":

eval test6 "An,y1,y2,y4 ($tmsc(8*n+1,y1) & $tmsc(8*n+2,y2) & 
   $tmsc(8*n+4,y4)) => $tmsc(16*n+6,(2*y2+y4)-y1)":

eval test7 "An,y4 $tmsc(8*n+4,y4) => $tmsc(16*n+7,2*y4)":

eval test8 "An,y1,y2,y4 ($tmsc(8*n+1,y1) & $tmsc(8*n+2,y2) & 
   $tmsc(8*n+4,y4)) => $tmsc(16*n+8,(y2+2*y4)-y1)":

eval test9 "An,y1,y2,y4 ($tmsc(8*n+1,y1) & $tmsc(8*n+2,y2) & 
   $tmsc(8*n+4,y4)) => $tmsc(16*n+9,(2*y2+2*y4)-2*y1)":

eval test10 "An,y1,y2,y4,y6 ($tmsc(8*n+1,y1) & $tmsc(8*n+2,y2) & 
   $tmsc(8*n+4,y4) & $tmsc(8*n+6,y6)) => $tmsc(16*n+10,(y2+y4+y6)-y1)":

eval test11 "An,y6 $tmsc(8*n+6,y6) => $tmsc(16*n+11,2*y6)":

eval test12 "An,y1,y2,y4,y6 ($tmsc(8*n+1,y1) & $tmsc(8*n+2,y2) & 
   $tmsc(8*n+4,y4) & $tmsc(8*n+6,y6)) => $tmsc(16*n+12,(y1+3*y6)-(y2+y4))":

eval test13 "An,y1,y2,y4,y6 ($tmsc(8*n+1,y1) & $tmsc(8*n+2,y2) & 
   $tmsc(8*n+4,y4) & $tmsc(8*n+6,y6)) => 
   $tmsc(16*n+13,(2*y1+4*y6)-(2*y2+2*y4))":

eval test14 "An,y1,y4,y6 ($tmsc(8*n+1,y1) & $tmsc(8*n+4,y4) & 
   $tmsc(8*n+6,y6)) => $tmsc(16*n+14,(y1+5*y6)-4*y4)":

eval test15 "An,y2,y4,y6 ($tmsc(8*n+2,y2) & $tmsc(8*n+4,y4) & 
   $tmsc(8*n+6,y6)) => $tmsc(16*n+15,(2*y2+6*y6)-6*y4)":
\end{verbatim}

\end{document}